\theoremstyle{plain}
\newtheorem{thm}{\protect\theoremname}
\theoremstyle{definition}
\newtheorem{defn}[thm]{\protect\definitionname}
\newtheorem{corollary}[thm]{\protect Corollary}
\newenvironment{proof}[1][\protect\proofname]{\par
\normalfont\topsep6\p@\@plus6\p@\relax
\trivlist
\itemindent\parindent
\item[\hskip\labelsep\scshape #1]\ignorespaces
}{%
\endtrivlist\@endpefalse
}
\providecommand{\proofname}{Proof}
\providecommand{\definitionname}{Definition}
\providecommand{\theoremname}{Theorem}
\global\long\def\trace{\operatorname{Tr}}
\global\long\def\ketbra#1#2{\ket{#1}\!\bra{#2}}
\global\long\def\real{\operatorname{Re}}
\global\long\def\one{\mathds{1}}
\begin{document}

\title{Almost all four-particle pure states are determined \\
by their two-body marginals}

\author{Nikolai Wyderka}
\affiliation{Naturwissenschaftlich-Technische Fakultät, Universität Siegen, 
Walter-Flex-Str.~3, D-57068 Siegen, Germany}

\author{Felix Huber}
\affiliation{Naturwissenschaftlich-Technische Fakultät, Universität Siegen, 
Walter-Flex-Str.~3, D-57068 Siegen, Germany}

\author{Otfried Gühne}
\affiliation{Naturwissenschaftlich-Technische Fakultät, Universität Siegen, 
Walter-Flex-Str.~3, D-57068 Siegen, Germany}

\date{\today}

\pacs{03.65.Ta, 03.65.Ud, 03.67.-a}
\begin{abstract}
We show that generic pure states (states drawn according to the Haar measure) of four
particles of equal internal dimension are uniquely determined among all other pure states
by their two-body marginals. In fact, certain subsets of three of the two-body marginals
suffice for the characterization. We also discuss generalizations of the statement to pure
states of more particles, showing that these are almost always determined among pure states
by three of their $(n-2)$-body marginals. Finally, we present special families of symmetric
pure four-particle states that share the same two-body marginals and are therefore
undetermined. These are four-qubit Dicke states in superposition with generalized GHZ states.
\end{abstract}
\maketitle

{\it Introduction.}
The question of what can be learned about a multiparticle system by looking
at some particles only is central for many problems in physics. In quantum
mechanics, this problem can be formulated in a mathematical fashion as 
follows: Given a quantum state $\rho$ on $n$ particles, which properties
of this state can be inferred from knowledge of the $k$-particle reduced states
only? This question is naturally connected to the phenomenon of entanglement.
Indeed, considering pure states of two-particles, product states are always 
determined by their marginals, whereas entangled states can exhibit reduced 
states that admit multiple compatible joint states. Consequently, entangled
states may contain information in correlations among many parties that is lost 
when just having access to the reductions. In fact, many works have considered
the problem how entanglement or other global properties relate to properties of 
the reduced states \cite{tkgb09, wuerflinger12, Walter1205, miklin16}. On a more fundamental level, 
one may ask the question whether for a given set of reduced states the original 
global state is the only state having this set of reduced states \cite{coleman63, klyachko04, sawicki13, schilling17}.

This question is also of practical interest: If a quantum state happens to be 
the unique ground state of a Hamiltonian, it may be obtained by engineering this
Hamiltonian and then cooling down the system. In practice, typical Hamiltonians 
are limited to having interactions between two or three particles only. The question 
of whether the ground state of such a Hamiltonian is unique is then directly related 
to the question of whether the state one wants to prepare is uniquely determined 
by its two- or three-body marginals \citep{zhou09, huber2016characterizing}.

The question of uniqueness was analyzed in detail by Linden and coworkers, who showed that 
almost all pure three-qubit states are determined among all mixed states by their 
two-body marginals \citep{linden2002almost}. Later, Diósi showed that two of the three 
two-body marginals suffice to characterize uniquely a pure three-particle state 
among all other pure states \citep{diosi2004three}. Jones and Linden finally proved 
that generic states of $n$ qudits are uniquely determined by certain sets of reduced 
states of just more than half of the parties, whereas the reduced states of fewer 
than half of the parties are not sufficient \citep{jones2005parts}. Thus, higher-order 
correlations of most pure quantum states are not independent of the lower-order 
correlations.

In this paper, we investigate the case of four-particle states having equal internal 
dimension. We show that generic pure states of four particles are uniquely determined 
among all pure states by certain sets of their two-body marginals. To that end, we 
begin by defining what we mean by generic states and distinguish the different kinds 
of uniqueness, namely uniqueness among pure and uniqueness among all states. We then 
prove our main result, first for the case of qubits and 
subsequently the general case of qudits. The theorem is then generalized to generic
$n$-particle states, which can be shown to be determined in a similar way by certain
sets of three of their $(n-2)$-body marginals.
Finally, we list some specific examples for the exceptional case of states of four
particles that are not determined by their two-body marginals.

\begin{figure}[t!]
\begin{centering}
\includegraphics[width=0.85\columnwidth]{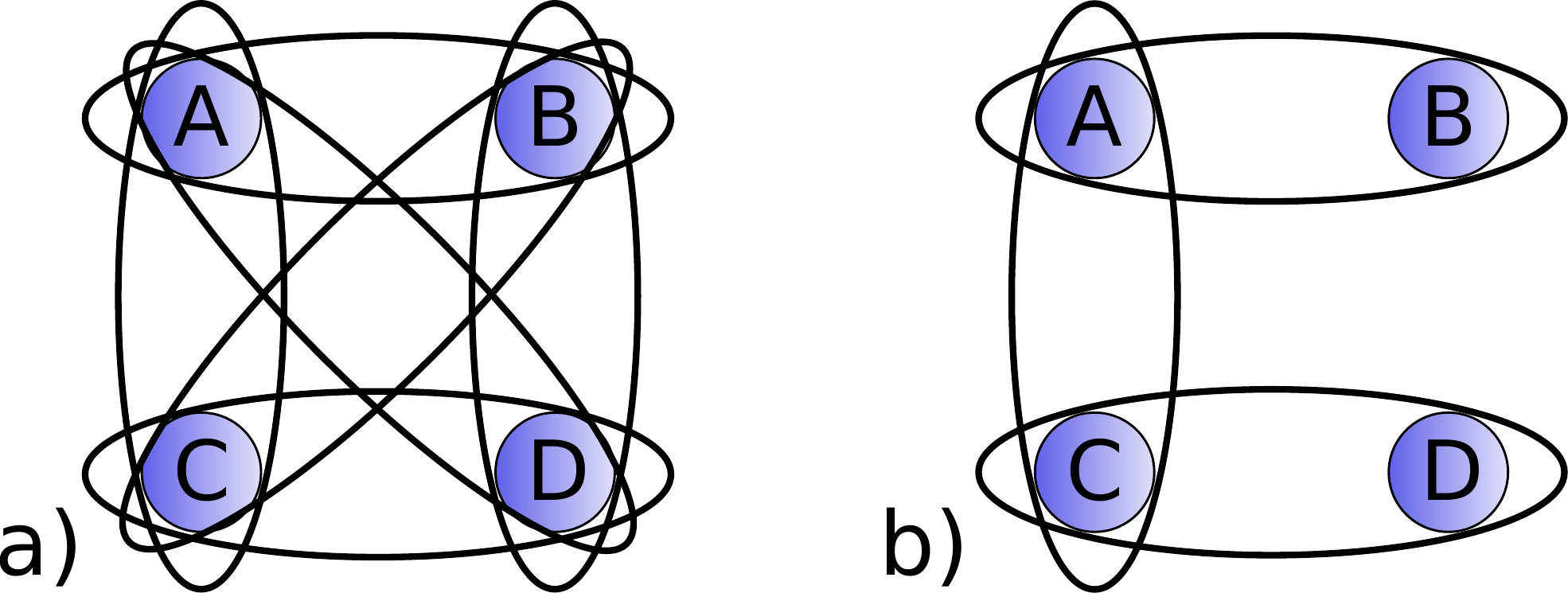}
\par\end{centering}
\protect\caption{Illustration of two different sets of two-body marginals: 
a) the set of all six two-body marginals, b) a set of three two-body marginals
that is shown to suffice to uniquely determine pure generic pure states.
\label{fig:Illustration-of-the}}
\end{figure}

{\it Random states and uniqueness.}
We begin with some basic definitions. Given an $n$-particle quantum
state $\rho$ of parties $\mathcal{P}=\{ P_{1},\ldots,P_{n}\} $,
its $k$-body marginal of parties $\mathcal{S}=\{ P_{i_{1}},\ldots,P_{i_{k}}\}$
is defined as
\begin{equation}
\rho_{\mathcal{S}}:=\trace_{\bar{\mathcal{S}}}(\rho)\:,
\end{equation}
where the trace is a partial trace over parties $\bar{\mathcal{S}}=\mathcal{P}\setminus \mathcal{S}$.
When stating the question of uniqueness, i.e., whether a given state
is uniquely determined by some of its marginals, it is important to specify
the set of states considered. Usually, two different sets are taken into
account,
namely the set of pure states and the set of all states, leading to
two different kinds of uniqueness, namely that of uniqueness among
pure states (UDP) and uniqueness among all states (UDA). We adopt
here the definition of Ref.~\citep{chen2013uniqueness} and extend
it by specifying which marginals are involved:
\begin{defn}
A state $\ket{\psi}$ is called
\begin{itemize}
\item $k$-\emph{uniquely determined among pure states} ($k$-UDP), if there
exists no other pure state having the same $k$-body marginals as
$\ket{\psi}$.
\item $k$-\emph{uniquely determined among all states} ($k$-UDA), if there
exists no other state (mixed or pure) having the same $k$-body marginals
as $\ket{\psi}$.
\end{itemize}
\end{defn}
Using this language, the results of Ref.~\citep{linden2002almost}
show that almost all three-qubit pure states are $2$-UDA, that is,
given a random pure state $\ket{\psi}$, it is uniquely determined
by its marginals $\rho_{\text{AB}}$, $\rho_{\text{AC}}$ and $\rho_{\text{BC}}$.
Ref.~\citep{diosi2004three} states that knowledge of just two of
the three two-body marginals suffices to fix the state among
all pure states (UDP). Later, these results were generalized to states of
certain higher internal dimensions, for a more general overview see
for example Ref.~\citep{chen2013uniqueness}. Note that while UDA
implies UDP, the converse in general does not need to be true and 
there are examples of four-qubit states which are 2-UDP but not 2-UDA
\citep{xin2017quantum}. Other cases of UDP versus UDA are discussed
in Ref.~\citep{chen2013uniqueness}.

Note that in some cases a subset of all $k$-body marginals already suffices
to show uniqueness, as in the case of almost all three-qubit states
discussed above \citep{diosi2004three}. In this paper, we will show
that in case of four particles, specific sets consisting of three
of the six two-body marginals suffice to determine any generic pure
states among all pure states.

Generic states are understood to be states drawn randomly according
to the Haar measure. Here, we adopt a special procedure to obtain
such random states in a Schmidt decomposed form. To that end, 
consider a four-particle pure state 
$\ket{\psi}\in\mathcal{H}_{A}\otimes\mathcal{H}_{B}\otimes\mathcal{H}_{C}\otimes\mathcal{H}_{D}$,
where $\dim\mathcal{H}_{A}=\dim\mathcal{H}_{B}=\ldots=d$. Using the
Schmidt decomposition along the bipartition ($AB \vert CD$), 
the state can be written as
\begin{equation}
\ket{\psi}=\sum_{i=1}^{d^{2}}\sqrt{\lambda_{i}}\ket{i}_{\text{AB}}\otimes\ket{i}_{\text{CD}},\label{eq:schmidtdeco}
\end{equation}
where $\sum_{i}\lambda_{i}=1$. If the state has full Schmidt rank,
i.e.,~$\lambda_{i}\neq0$ for all $i$, then the sets $\ket{i}_{\text{AB}}$
and $\ket{i}_{\text{CD}}$ form orthonormal bases in the composite
Hilbert spaces $\mathcal{H}_{A}\otimes\mathcal{H}_{B}$ and 
$\mathcal{H}_{C}\otimes\mathcal{H}_{D}$, respectively.

\begin{defn}
A \emph{generic} four-particle pure state is a state 
$\ket{\psi}\in\mathcal{H}_{A}\otimes\mathcal{H}_{B}\otimes\mathcal{H}_{C}\otimes\mathcal{H}_{D}$
drawn randomly according to the Haar measure. Writing such state 
as in Eq.~(\ref{eq:schmidtdeco}), the Schmidt bases and the set of 
Schmidt coefficients are independent from each other. 
The distribution of the Schmidt coefficients
is given by \citep{scott2003entangling,lloyd1988complexity}
\begin{multline}
P(\lambda_{1},\ldots,\lambda_{d^2})\text{d}\lambda_{1}\ldots\text{d}\lambda_{d^2}\\
=N\delta\left(1-\sum_{i=1}^{d^2}\lambda_{i}\right)\prod_{1\leq i<j\leq d^2}
(\lambda_{i}-\lambda_{j})^{2}\text{d}\lambda_{1}\ldots\text{d}\lambda_{d^2}
\end{multline}
and the Schmidt bases are distributed according to the Haar measure
of unitary operators on the smaller Hilbert spaces.
\end{defn}

The mutual independence of the two Schmidt bases and the coefficients
can be seen from the fact that in the Haar measure, for the probability
distribution $p(\ket{\psi})$ to obtain state $\ket{\psi}$ holds
$p(\ket{\psi})=p(\one_{\text{AB}}\otimes U_{\text{CD}}\ket{\psi}) = 
p(U_{\text{AB}}\otimes\one_{\text{CD}}\ket{\psi})$. 

Generic states as defined 
above exhibit two other important properties: They have full Schmidt rank 
and pairwise distinct Schmidt coefficients. We would like to add that while 
the definition above makes use of the Haar measure, we do not explicitly 
require it. Any measure with the same independence properties between 
the two Schmidt
bases and Schmidt coefficients would work as well, as long as the sets of 
states having non-full Schmidt rank or degenerate Schmidt coefficients are 
also of measure zero.

{\it The case of qubits.}
To begin with, we investigate the qubit case, where $d=2$. Let 
$\ket\psi=\sum_{i=1}^{4}\sqrt{\lambda_{i}}\ket{i}_{\text{AB}}\otimes\ket{i}_{\text{CD}}$
be a generic state in the sense defined above. The two-body marginal of parties $A$ and $B$ 
is given by
\begin{equation}
\rho_{\text{AB}}=\trace_{\text{CD}}(\ketbra{\psi}{\psi})=\sum_{i=1}^{4}\lambda_{i}\ketbra ii_{\text{AB}}\label{eq:rhoab}
\end{equation}
and similarly for $CD$. This is the starting point for the proof
of the following theorem.

\begin{thm}
Almost all four-qubit pure states are uniquely determined among pure
states by the three two-body marginals $\rho_{\text{AB}}$, $\rho_{\text{CD}}$
and $\rho_{\text{BD}}$. In particular, this implies that they are
$2$-UDP.
\end{thm}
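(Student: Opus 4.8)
The plan is to start from the Schmidt decomposition along $(AB|CD)$ and suppose that a second pure state $\ket{\phi}$ shares the three marginals $\rho_{\text{AB}}$, $\rho_{\text{CD}}$ and $\rho_{\text{BD}}$ with the generic state $\ket{\psi}$. Since $\rho_{\text{AB}}$ and $\rho_{\text{CD}}$ are fixed and — by genericity — have full rank with nondegenerate spectrum $\{\lambda_i\}$, the eigenbases of these two marginals are pinned down up to phases. Hence $\ket{\phi}$ must also admit a Schmidt decomposition across $(AB|CD)$ with the same Schmidt coefficients $\lambda_i$ and the same Schmidt vectors $\ket{i}_{\text{AB}}$, $\ket{i}_{\text{CD}}$ up to a diagonal unitary: concretely $\ket{\phi}=\sum_i \sqrt{\lambda_i}\, e^{\mathrm{i}\theta_i}\ket{i}_{\text{AB}}\otimes\ket{i}_{\text{CD}}$. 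Up to an irrelevant global phase we may set $\theta_1=0$, so the only freedom left is three relative phases $\theta_2,\theta_3,\theta_4$, and I must show the constraint from the remaining marginal $\rho_{\text{BD}}$ forces all of them to vanish (generically).

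Next I would write out $\rho_{\text{BD}}$ explicitly. Tracing out $A$ and $C$ from $\ketbra{\psi}{\psi}$ gives a sum over pairs $(i,j)$ of terms $\sqrt{\lambda_i\lambda_j}\,\trace_A(\ketbra{i}{j}_{\text{AB}})\otimes\trace_C(\ketbra{i}{j}_{\text{CD}})$, and doing the same for $\ket{\phi}$ multiplies the $(i,j)$ term by $e^{\mathrm{i}(\theta_i-\theta_j)}$. Setting $\rho_{\text{BD}}^{\psi}=\rho_{\text{BD}}^{\phi}$ and subtracting, the diagonal $i=j$ terms cancel, and one is left with the condition
\begin{equation}
\sum_{i\neq j}\sqrt{\lambda_i\lambda_j}\,\bigl(e^{\mathrm{i}(\theta_i-\theta_j)}-1\bigr)\,M_{ij}^{A}\otimes M_{ij}^{C}=0,
\end{equation}
where $M_{ij}^{A}=\trace_A(\ketbra{i}{j}_{\text{AB}})$ acts on $\mathcal{H}_B$ and $M_{ij}^{C}=\trace_C(\ketbra{i}{j}_{\text{CD}})$ acts on $\mathcal{H}_D$. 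Since $d=2$, $B$ and $D$ are qubits, so this is an equation in a $4\times 4$ matrix space; pairing it against a suitable operator basis on $\mathcal{H}_B\otimes\mathcal{H}_D$ turns it into a finite linear system in the unknowns $z_k:=e^{\mathrm{i}\theta_k}-1$ (or, after separating real and imaginary parts, in $\cos\theta_k$ and $\sin\theta_k$). The key point is that for generic Schmidt bases the coefficient matrix of this linear system has full rank, so the only solution is $z_2=z_3=z_4=0$, i.e.\ $\ket{\phi}=\ket{\psi}$ up to global phase; since the nongeneric locus is a proper algebraic subvariety, it has Haar measure zero, which gives the "almost all" conclusion. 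The final sentence, that $2$-UDP follows, is immediate because $\{\rho_{\text{AB}},\rho_{\text{CD}},\rho_{\text{BD}}\}$ is a subset of the full collection of two-body marginals, so determination by this subset a fortiori gives determination by all six.

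The main obstacle is the genericity argument for the linear system: one has to exhibit explicitly, or at least argue cleanly, that the map sending the phase perturbation to the difference $\rho_{\text{BD}}^{\phi}-\rho_{\text{BD}}^{\psi}$ is injective for generic Schmidt bases and coefficients. The natural way is to compute its Jacobian (or the relevant Gram-type determinant in the $M_{ij}^{A}\otimes M_{ij}^{C}$) as a polynomial in the entries of the Schmidt-basis unitaries $U_{\text{AB}}, U_{\text{CD}}$ and in the $\lambda_i$, and show this polynomial is not identically zero by evaluating it at one convenient choice of bases and coefficients. A subtlety to handle carefully is that the off-diagonal blocks $M_{ij}^{A}$ are not independent for a two-qubit space (there are only $4$ linearly independent $2\times2$ matrices but $\binom{4}{2}=6$ ordered-pair indices on each side), so one must check the structured overlap actually leaves no nontrivial kernel — this is exactly where the specific choice of the third marginal as $\rho_{\text{BD}}$ (rather than, say, $\rho_{\text{AC}}$, which would mix things differently) matters, and where the qubit case is delicate enough to warrant doing first before passing to general $d$.
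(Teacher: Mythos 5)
Your reduction to the phase freedom is exactly the paper's: nondegeneracy of the generic Schmidt spectrum forces $\ket{\phi}=\sum_i e^{\mathrm{i}\theta_i}\sqrt{\lambda_i}\ket{i}_{\text{AB}}\otimes\ket{i}_{\text{CD}}$, and the remaining work is to show that matching $\rho_{\text{BD}}$ kills the relative phases. But the step you lean on --- ``the coefficient matrix of this linear system has full rank, so the only solution is $z_2=z_3=z_4=0$'' --- is where the argument breaks, in two ways. First, the system is not linear in your unknowns $z_k=e^{\mathrm{i}\theta_k}-1$ (nor in $\cos\theta_k,\sin\theta_k$): the coefficients that actually appear are $e^{\mathrm{i}(\theta_i-\theta_j)}-1$, and for $i,j\geq 2$ these are quadratic in the $z_k$. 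To linearize, one must promote the six quantities $\gamma_{ij}=(1-e^{\mathrm{i}(\theta_i-\theta_j)})\sqrt{\lambda_i\lambda_j}$, $i<j$, to independent complex unknowns (12 real parameters). Second, once you do that, the full-rank hope fails: the dependency you flag as a ``subtlety'' is real and does \emph{not} resolve in your favor. The operators $M_{ij}^A\otimes M_{ij}^C$ span only a 13-dimensional subspace of the 16-dimensional operator space on $\mathcal{H}_B\otimes\mathcal{H}_D$, and after accounting for Hermiticity and tracelessness the matching condition yields only 9 independent real equations for the 12 real unknowns. So the linearized system has a 3-real-dimensional kernel for \emph{every} choice of generic Schmidt bases, and no genericity or Jacobian argument at the level of the linear map can close this.

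The missing idea is to exploit the nonlinear structure of the $\gamma_{ij}$. Writing $c_{ij}=\gamma_{ij}/\sqrt{\lambda_i\lambda_j}=1-e^{\mathrm{i}(\theta_i-\theta_j)}$, these satisfy the compatibility relations $c_{ij}c_{jk}=c_{ij}+c_{jk}-c_{ik}$ (in particular $\vert c_{ij}\vert^2=c_{ij}+\bar{c}_{ij}$). Substituting the general 3-parameter solution of the linear system into these quadratic constraints produces equations whose coefficients involve the Schmidt coefficients $\lambda_i$ --- which, crucially, did not enter the linear system and are generically independent of the Schmidt bases. This independence is what forces the three remaining real parameters to vanish, giving $\gamma_{ij}=0$ and hence $\theta_i\equiv\theta$. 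Without this second, nonlinear stage your proof cannot be completed. (A minor additional point: the choice of $\rho_{\text{BD}}$ as the third marginal is not essential in the way you suggest; any two-body marginal connecting the $AB$ and $CD$ sides works equally well.)
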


\begin{proof}
Let $\ket\psi$ be a generic state in the Schmidt decomposed form
in Eq.~(\ref{eq:schmidtdeco}). We arrange the Schmidt bases such
that the Schmidt coefficients are in decreasing order, i.e.~$\lambda_{i}\geq\lambda_{i+1}$.
Suppose that there is another pure state $\ket\phi$ which exhibits
the same two-body marginals $\rho_{\text{AB}}$ and $\rho_{\text{CD}}$
as $\ket\psi$. As the $\lambda_{i}$ are pairwise distinct and in
decreasing order, the Schmidt bases of $\ket\phi$ and $\ket\psi$
have to coincide up to a phase. Thus, $\ket\phi$ must be of the form
\begin{equation}
\ket\phi=\sum_{i=1}^{4}e^{i\varphi_{i}}\sqrt{\lambda_{i}}\ket{i}_{\text{AB}}\otimes\ket{i}_{\text{CD}}.
\end{equation}
Therefore, the only degrees of freedom left of $\ket\phi$ are the
four phases $\varphi_{i}$.

We now demand that also the marginals of parties $B$ and $D$ coincide,
i.e.~$\trace_{\text{AC}}(\ketbra{\psi}{\psi})=\trace_{\text{AC}}(\ketbra{\phi}{\phi})$
(but any other marginal would be fine, too): 
\begin{eqnarray}
\!\!\!\rho_{\text{BD}} & \!=\!\! & \sum_{i,j=1}^{4}\!\!\sqrt{\lambda_{i}\lambda_{j}}\trace_{\text{AC}}(\ketbra ij_{\text{AB}}\otimes\ketbra ij_{\text{CD}})\nonumber \\
\!\!\! & \!\overset{!}{=}\!\! & \sum_{i,j=1}^{4}\!\! e^{i(\varphi_{i}-\varphi_{j})}\sqrt{\lambda_{i}\lambda_{j}}\trace_{\text{AC}}(\ketbra ij_{\text{AB}}\otimes\ketbra ij_{\text{CD}})\text{.}\label{eq:lindepeq}
\end{eqnarray}
The sum runs over operators on the space of parties $B$ and $D$.
For every pair $i,j$, this operator is given by 
\begin{equation}
O_{ij}=\trace_{\text{AC}}(\ketbra ij_{\text{AB}}\otimes\ketbra ij_{\text{CD}}).
\end{equation}
The 16 operators $O_{ij}$ span a subspace in the 16-dimensional space
of operators on $\mathcal{H}_{B}\otimes\mathcal{H}_{D}$. As we will
see later, this subspace is only 13-dimensional, thus the operators
must be linearly dependent. Therefore, we cannot simply compare both
sides of Eq.~(\ref{eq:lindepeq}) term by term to conclude that $\varphi_{i}=\varphi_{j}$.
Instead, let us interpret the 16 operators $O_{ij}$ as vectors in
the 16-dimensional operator space. Thus, we are looking for solutions
of the equation
\begin{equation}
\sum_{i,j=1}^{4}(1-e^{i(\varphi_{i}-\varphi_{j})})\sqrt{\lambda_{i}\lambda_{j}}O_{ij}\equiv\sum_{i,j=1}^{4}\gamma_{ij}O_{ij}=0_{4\times4}\:,\label{eq:gammaO}
\end{equation}
where 
\begin{equation}
\gamma_{ij}:=(1-e^{i(\varphi_{i}-\varphi_{j})})\sqrt{\lambda_{i}\lambda_{j}}\:\text{.}\label{eq:gamma}
\end{equation}
These are 16 equations, one for every entry of the resulting $4\times4$
matrix. We can treat Eq.~(\ref{eq:gammaO}) as a system of linear
equations for the $\gamma_{ij}$ and look for solutions that can be
written in the specific form in Eq.~(\ref{eq:gamma}). It implies
that
\begin{eqnarray}
\gamma_{ii} & = & 0\:,\label{eq:gammaprop1}\\
\gamma_{ij} & = & \bar{\gamma}_{ji}\:,\label{eq:gammaprop2}
\end{eqnarray}
Therefore, there are effectively six undetermined complex-valued variables
$\gamma_{ij}$ for $1\leq i<j\leq4$.

Let us now investigate the linear system in Eq.~(\ref{eq:gammaO})
in more detail. Note that every $O_{ij}$ can be written as a product
\begin{equation}
O_{ij}=\trace_{\text{A}}(\ketbra ij_{\text{AB}})\otimes\trace_{\text{C}}(\ketbra ij_{\text{CD}})\equiv Q_{ij}\otimes R_{ij}\:,
\end{equation}
where $Q_{ij}=\trace_{\text{A}}(\ketbra ij_{\text{AB}})$, $R_{ij}=\trace_{\text{C}}(\ketbra ij_{\text{CD}})$.
The matrices $Q_{ij}$ and $R_{ij}$ inherit some properties from
the underlying orthonormal bases:
\begin{eqnarray}
\trace(Q_{ij}) & = & \delta_{ij}\:,\nonumber \\
Q_{ij}^{\dagger} & = & Q_{ji}\label{eq:properties}
\end{eqnarray}
and similarly for $R_{ij}$.

Using these properties together with Eqs.~(\ref{eq:gammaprop1})
and (\ref{eq:gammaprop2}), Eq.~(\ref{eq:gammaO}) can be written
as
\begin{equation}
\sum_{i<j}\gamma_{ij}Q_{ij}\otimes R_{ij}+\bar{\gamma}_{ij}Q_{ij}^{\dagger}\otimes R_{ij}^{\dagger}\overset{!}{=}0\:.
\end{equation}

For $i\neq j$, $\trace(Q_{ij})=\trace(R_{ij})=0$ and we can write
$Q_{ij}$ and $R_{ij}$ explicitly as 
\begin{eqnarray}
Q_{ij} & = & \begin{pmatrix}q_{ij}^{11} & q_{ij}^{12}\\
q_{ij}^{21} & -q_{ij}^{11}
\end{pmatrix},\\
R_{ij} & = & \begin{pmatrix}r_{ij}^{11} & r_{ij}^{12}\\
r_{ij}^{21} & -r_{ij}^{11}
\end{pmatrix}.
\end{eqnarray}
Thus, 
\begin{eqnarray}
0 & \!=\!\!\! & \sum_{i<j}\gamma_{ij}Q_{ij}\otimes R_{ij}+\bar{\gamma}_{ij}Q_{ij}^{\dagger}\otimes R_{ij}^{\dagger}\nonumber \\
 & \!=\!\!\! & \sum_{i<j}\!\begin{pmatrix}\!\gamma_{ij}q_{ij}^{11}R_{ij}\!+\!\bar{\gamma}_{ij}\bar{q}_{ij}^{11}R_{ij}^{\dagger} & \phantom{-(}\gamma_{ij}q_{ij}^{12}R_{ij}\!+\!\bar{\gamma}_{ij}\bar{q}_{ij}^{21}R_{ij}^{\dagger}\phantom{)}\\
\!\gamma_{ij}q_{ij}^{21}R_{ij}\!+\!\bar{\gamma}_{ij}\bar{q}_{ij}^{12}R_{ij}^{\dagger} & -(\gamma_{ij}q_{ij}^{11}R_{ij}\!+\!\bar{\gamma}_{ij}\bar{q}_{ij}^{11}R_{ij}^{\dagger})
\end{pmatrix}\nonumber \\
 & \!=\! & \begin{pmatrix}A & B\\
B^{\dagger} & -A
\end{pmatrix}.\label{eq:mastereq}
\end{eqnarray}
Now we treat each submatrix $A$ and $B$ individually. Demanding
$A=0$ yields 
\begin{equation}
\sum_{i<j}\gamma_{ij}q_{ij}^{11}R_{ij}=-\sum_{i<j}\bar{\gamma}_{ij}\bar{q}_{ij}^{11}R_{ij}^{\dagger}\:\text{,}
\end{equation}
thus $\sum_{i<j}\gamma_{ij}q_{ij}^{11}R_{ij}$ must be skew-Hermitian.
As $R_{ij}$ has zero trace, we extract the following set of equations:
\begin{eqnarray}
\real(\sum_{i<j}\gamma_{ij}q_{ij}^{11}r_{ij}^{11}) & = & 0\:,\label{eq:lineq1}\\
\sum_{i<j}\gamma_{ij}q_{ij}^{11}r_{ij}^{12}+\sum_{i<j}\bar{\gamma}_{ij}\bar{q}_{ij}^{11}\bar{r}_{ij}^{21} & = & 0\:.
\end{eqnarray}
On the other hand, demanding $B=0$ yields
\begin{eqnarray}
\sum_{i<j}\gamma_{ij}q_{ij}^{12}r_{ij}^{11}+\sum_{i<j}\bar{\gamma}_{ij}\bar{q}_{ij}^{21}\bar{r}_{ij}^{11} & = & 0\:,\\
\sum_{i<j}\gamma_{ij}q_{ij}^{12}r_{ij}^{12}+\sum_{i<j}\bar{\gamma}_{ij}\bar{q}_{ij}^{21}\bar{r}_{ij}^{21} & = & 0\:,\\
\sum_{i<j}\gamma_{ij}q_{ij}^{12}r_{ij}^{21}+\sum_{i<j}\bar{\gamma}_{ij}\bar{q}_{ij}^{21}\bar{r}_{ij}^{12} & = & 0\:.\label{eq:lineq2}
\end{eqnarray}
Treating real and imaginary part separately, these are $3+6=9$ real
equations for the six complex values $\gamma_{ij}$.

Before continuing with the proof, we have to ensure that these equations
are linearly independent. This can be checked for by expanding the
Schmidt bases $\ket{i}_{\text{AB}}$ and $\ket{i}_{\text{CD}}$ in
terms of the computational basis, i.e. 
\begin{eqnarray}
\ket{i}_{\text{AB}} & = & \sum_{a,b=0}^{1}\mu_{ab}^{i}\ket{ab},\\
\ket{i}_{\text{CD}} & = & \sum_{c,d=0}^{1}\nu_{cd}^{i}\ket{cd},
\end{eqnarray}
where the only dependence among the $\ket{i}_{\text{AB}}$ is
\begin{equation}
\braket{i|j}_{\text{AB}}=\sum_{a,b}\mu_{ab}^{i}\bar{\mu}_{ab}^{j}=\delta_{ij}
\end{equation}
and similarly for $\nu$. Expressing the numbers $q_{ij}$ in terms
of the coefficients $\mu$,
\begin{equation}
q_{ij}^{bb^{\prime}}=\sum_{a}\mu_{ab}^{i}\bar{\mu}_{ab^{\prime}}^{j}\:,
\end{equation}
shows that the only dependence among the $q_{ij}$ is $q_{ij}^{11}=-q_{ij}^{22}$,
which has already been taken into account. Thus, the numbers $q_{ij}^{11}$,
$q_{ij}^{12}$ and $q_{ij}^{21}$ do not fulfill any additional constraints.
The same is true for the $r_{ij}$. As the orthonormal bases have
been chosen independently and randomly, the $q_{ij}$ and $r_{ij}$
are also independent from each other.

Returning to the proof, there is a three dimensional (real) solution
space for the $\gamma_{ij}$ due to Eqs.~(\ref{eq:lineq1}) to (\ref{eq:lineq2})
if we do not impose the constraints (\ref{eq:gamma}) yet. As $\gamma_{ij}=0$
for all~$i,j$ is certainly a solution, we can parametrize this solution
space by 
\begin{equation}
\gamma_{ij}=\sum_{a=1}^{3}x_{a}v_{ij}^{a}\:,\label{eq:solution3}
\end{equation}
where the $x_{a}$ are the three real-valued parameters. 

Luckily, we have additional constraints at hand as the $\gamma_{ij}$
are not independent. Let us define the normalized variables $c_{ij}:=(\lambda_{i}\lambda_{j})^{-\nicefrac{1}{2}}\gamma_{ij}$.
Then
\begin{eqnarray}
c_{ij}c_{jk} & = & (1-e^{i(\varphi_{i}-\varphi_{j})})(1-e^{i(\varphi_{j}-\varphi_{k})})\nonumber \\
 & = & 1-e^{i(\varphi_{i}-\varphi_{j})}-e^{i(\varphi_{j}-\varphi_{k})}+e^{i(\varphi_{i}-\varphi_{k})}\nonumber \\
 & = & c_{ij}+c_{jk}-c_{ik}\:,\label{eq:compatibility}
\end{eqnarray}
for all $i,j,k$. This implies also (setting $i=k$) 
\begin{equation}
\vert c_{ij}\vert^{2}=c_{ij}+\bar{c}_{ij}\:.\label{eq:cij2}
\end{equation}
Substituting for $c_{ij}$ the solution (\ref{eq:solution3}) yields
for all $i<j$
\begin{equation}
\sum_{a,b=1}^{3}x_{a}x_{b}v_{ij}^{a}\bar{v}_{ij}^{b}=\sqrt{\lambda_{i}\lambda_{j}}\sum_{a=1}^{3}x_{a}(v_{ij}^{a}+\bar{v}_{ij}^{a}).
\end{equation}
There are six equations for the three variables $x_{a}$. Taking the
four equations for $i=1$, $j=1,\ldots,4$, yields four independent
equations as each equation makes use of a different, independent Schmidt
coefficient $\lambda_{i}$. Additionally, any of the equations can
be solved for any of the $x_{a}$ and the Schmidt coefficients $\lambda_{i}$
have not been used to obtain the solutions in Eq.~(\ref{eq:solution3}).
Therefore, only the trivial solution $x_{a}=0$ exists, thus 
\begin{equation}
c_{ij}=\gamma_{ij}=0\:.
\end{equation}
Consequently, all phases $\varphi_{i}=\varphi$ must be equal. Thus
$\ket\phi=e^{i\varphi}\ket\psi$ which corresponds to the same physical
state.
\end{proof}
The same result is also true for other configurations of known marginals
that result from relabeling the particles. 

{\it The case of higher-dimensional systems.}
The proof can seamlessly be extended to the case of qudits 
having higher internal dimension $d$.

\begin{thm}
Almost all four-qudit pure states of internal dimension $d$ are uniquely
determined among pure states by the three two-body marginals of particles
$\rho_{\text{AB}}$, $\rho_{\text{CD}}$ and $\rho_{\text{BD}}$.
In particular, this implies that they are $2$-UDP.\label{thm:thm2}
\end{thm}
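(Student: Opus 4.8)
The plan is to rerun the proof of Theorem~1 with the four Schmidt terms replaced by $d^2$ of them, verifying that nothing beyond elementary bookkeeping changes. First I would reuse the opening reduction verbatim: writing $\ket\psi=\sum_{i=1}^{d^2}\sqrt{\lambda_i}\ket i_{\text{AB}}\otimes\ket i_{\text{CD}}$ with pairwise distinct Schmidt coefficients arranged in decreasing order, any competitor $\ket\phi$ sharing the marginals $\rho_{\text{AB}}$ and $\rho_{\text{CD}}$ has its Schmidt bases fixed up to phases, and since $\rho_{\text{AB}}$ and $\rho_{\text{CD}}$ carry the same nondegenerate spectrum the pairing $\ket i_{\text{AB}}\leftrightarrow\ket i_{\text{CD}}$ is forced to agree with that of $\ket\psi$; hence $\ket\phi=\sum_{i=1}^{d^2}e^{i\varphi_i}\sqrt{\lambda_i}\ket i_{\text{AB}}\otimes\ket i_{\text{CD}}$. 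Imposing equality of $\rho_{\text{BD}}$ then yields, exactly as in Eqs.~(\ref{eq:gammaO}) and~(\ref{eq:properties}), the linear constraint $\sum_{i,j}\gamma_{ij}O_{ij}=0$ with $O_{ij}=Q_{ij}\otimes R_{ij}$, $Q_{ij}=\trace_{\text{A}}(\ketbra ij_{\text{AB}})$, $R_{ij}=\trace_{\text{C}}(\ketbra ij_{\text{CD}})$, $\gamma_{ii}=0$ and $\gamma_{ij}=\bar\gamma_{ji}$, so that $\binom{d^2}{2}$ complex unknowns remain, the structural identities $\trace Q_{ij}=\delta_{ij}$, $Q_{ij}^\dagger=Q_{ji}$ (and their $R$-counterparts) being dimension independent.

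The part that genuinely must be redone is the analysis of this linear system, where for $d>2$ one can no longer simply write out $2\times2$ blocks. Expanding $\ket i_{\text{AB}}=\sum_{ab}\mu^i_{ab}\ket{ab}$ and $\ket i_{\text{CD}}=\sum_{cd}\nu^i_{cd}\ket{cd}$, the entries $q^{bb'}_{ij}=\sum_a\mu^i_{ab}\bar\mu^j_{ab'}$ obey only $\sum_b q^{bb}_{ij}=\delta_{ij}$, and similarly for the $r$'s; for Haar-random Schmidt bases these are the sole relations, and the $q$'s and $r$'s are mutually independent. Since $Q_{ij}$ and $R_{ij}$ are traceless for $i\neq j$, I would expand each $O_{ij}$ in a fixed Hermitian operator basis of $\mathcal H_B\otimes\mathcal H_D$, turning $\sum_{i,j}\gamma_{ij}O_{ij}=0$ into a real linear system whose coefficient matrix is built from the $\mu$'s and $\nu$'s alone and is therefore independent of the $\lambda_i$; its solution space I would parametrize, in analogy with Eq.~(\ref{eq:solution3}), as $\gamma_{ij}=\sum_a x_a v^a_{ij}$ with real parameters $x_a$ and $v^a_{ij}$ depending only on the Schmidt bases.

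Finally I would feed in the nonlinear compatibility relations, which are unchanged: with $c_{ij}:=(\lambda_i\lambda_j)^{-1/2}\gamma_{ij}=1-e^{i(\varphi_i-\varphi_j)}$ one still has $c_{ij}c_{jk}=c_{ij}+c_{jk}-c_{ik}$ as in Eq.~(\ref{eq:compatibility}), hence $|c_{ij}|^2=c_{ij}+\bar c_{ij}$. Substituting the parametrization gives, for each pair $i<j$, an identity of the shape $\big|\sum_a x_a v^a_{ij}\big|^2=\sqrt{\lambda_i\lambda_j}\sum_a x_a\bigl(v^a_{ij}+\bar v^a_{ij}\bigr)$, in which the left side and the linear form on the right are again $\lambda$-free. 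Fixing $i$ and letting $j$ range produces equations each carrying a distinct Schmidt coefficient that was never used in constructing the $v^a_{ij}$, so the genericity of the spectrum (drawn independently of the bases) forces both sides to vanish; thus $\gamma_{1j}=0$ for all $j$, all $\varphi_i$ coincide, and $\ket\phi=e^{i\varphi}\ket\psi$.

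The main obstacle I anticipate is discharging these two genericity steps uniformly in $d$: first, that for Haar-random Schmidt bases the $\mu$- and $\nu$-parametrized linear system attains its expected rank, which is a Zariski-genericity statement and can be reduced to exhibiting a single basis choice where the rank is maximal; and second, that after substitution into $|c_{ij}|^2=c_{ij}+\bar c_{ij}$ the now higher-dimensional linear solution space is still annihilated. The latter should succeed for the same reason as in the qubit case --- the $\lambda_i$ never entered the parametrization $\gamma_{ij}=\sum_a x_a v^a_{ij}$, so every quadratic constraint can be solved for some $x_a$ in terms of a distinct $\sqrt{\lambda_i\lambda_j}$ --- but checking that for every $d$ enough such genuinely independent constraints are available, so that only $x_a=0$ survives, is where the actual work lies.
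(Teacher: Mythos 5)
Your proposal follows essentially the same route as the paper: fix the Schmidt bases via $\rho_{\text{AB}}$ and $\rho_{\text{CD}}$, reduce to the linear system $\sum_{i,j}\gamma_{ij}O_{ij}=0$ with $\gamma_{ii}=0$, $\gamma_{ij}=\bar\gamma_{ji}$, parametrize its kernel linearly in the basis data alone, and then kill the remaining parameters with the $\lambda$-dependent quadratic compatibility relations $|c_{ij}|^2=c_{ij}+\bar c_{ij}$ for $i=1$. The one piece you flag as outstanding work is exactly what the paper supplies: an explicit count showing the block structure yields $(d^2-1)^2$ real equations, cutting the $d^2(d^2-1)$ real unknowns down to a $(d^2-1)$-dimensional solution space that matches the number of free phases and is then annihilated by the $d^2$ compatibility equations, each carrying a distinct Schmidt coefficient.
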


\begin{proof}
The proof follows exactly the same steps as in the qubit case. The
bases of the subspaces $A,B$ and $C,D$ are then $d^{2}$-dimensional,
thus $i$ and $j$ run from $1$ to $d^{2}$ and there are $d^{2}$
free phases {[}$(d^{2}-1)$ if ignoring a global phase{]}. There are
then $\frac{d^{2}(d^{2}-1)}{2}$ different complex-valued $\gamma_{ij}$
with $i<j$. The Eq.~(\ref{eq:mastereq}) consists in this case of
$d\times d$ submatrices:{\small{}
\begin{equation}
\sum_{i<j}\!\begin{pmatrix}\gamma_{ij}q_{ij}^{11}R_{ij}+\bar{\gamma}_{ij}\bar{q}_{ij}^{11}R_{ij}^{\dagger} & \ldots & \gamma_{ij}q_{ij}^{1d}R_{ij}+\bar{\gamma}_{ij}\bar{q}_{ij}^{d1}R_{ij}\\
\vdots & \ddots & \vdots\\
\gamma_{ij}q_{ij}^{d1}R_{ij}+\bar{\gamma}_{ij}\bar{q}_{ij}^{1d}R_{ij}^{\dagger} & \ldots
\end{pmatrix}=0\:\text{.}
\end{equation}
}Again, the lower left submatrices are the adjoints of the upper right
ones, thus it suffices to set the upper right ones to zero. All submatrices
on the diagonal must be skew-Hermitian, and the last diagonal matrix
can be expressed by the other diagonal entries due to tracelessness:
\begin{itemize}
\item Every off-diagonal submatrix such as $\gamma_{ij}q_{ij}^{12}R_{ij}+\bar{\gamma}_{ij}\bar{q}_{ij}^{21}R_{ij}^{\dagger}$
yields $2(d^{2}-1)$ real equations, as $R_{ij}$ is a traceless $d\times d$
matrix, thus $r_{ij}^{dd}=-r_{ij}^{11}-\ldots-r_{ij}^{d-1,d-1}$.
There are $\frac{d(d-1)}{2}$ off-diagonal submatrices on the upper
right, thus they yield $(d^{2}-1)d(d-1)$ real equations.
\item Every diagonal submatrix is skew-Hermitian, which exhibits $d+2\frac{d(d-1)}{2}=d^{2}$
real equations, and traceless, which removes one of the diagonal equations,
leaving $d^{2}-1$ equations. There are $d-1$ diagonal submatrices,
yielding a total of $(d-1)(d^{2}-1)$ real equations.
\end{itemize}
Thus, there is a total of $(d-1)(d^{2}-1)+d(d-1)(d^{2}-1)=(d^{2}-1)^{2}$
(real) equations. Consequently, the $\frac{d^{2}(d^{2}-1)}{2}$ complex-valued
$\gamma_{ij}$ are reduced to $2\frac{d^{2}(d^{2}-1)}{2}-(d^{2}-1)^{2}=d^{2}-1$
real parameters, which matches again the number of free phases in
the ansatz.

From the compatibility equations (\ref{eq:compatibility}), we can
choose those with $i=1$, $j=1\ldots d^{2}$ to obtain a set of $d^{2}$
independent quadratic equations, as there are by assumption $d^{2}$
independent Schmidt coefficients. Therefore, the only solution is
$\gamma_{ij}=0$ as in the qubit case, implying that $\ket\phi=e^{i\varphi}\ket\psi$.
\end{proof}

{\it States of $n$ particles.}
Even though above theorem is limited to states of four particles, the result sheds some light on
states of more parties.

\begin{corollary}
For $n \geq 4$, almost all $n$-qudit pure states of parties $A,B,C,D,E_1,\ldots E_{n-4}$ of internal
dimension $d$ are uniquely determined among pure states by the three $(n-2)$-body marginals of particles
$\rho_{\text{ABE}_1\ldots}$, $\rho_{\text{CDE}_1\ldots}$ and $\rho_{\text{BDE}_1\ldots}$.
In particular, this implies that they are $(n-2)$-UDP.\label{thm:cor1}
\end{corollary}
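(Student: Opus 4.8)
The plan is to reduce the statement to the four-particle result, Theorem~\ref{thm:thm2}, by treating $E_1,\ldots,E_{n-4}$ as spectators glued to the ``$CD$'' block. Concretely, one Schmidt-decomposes $\ket\psi$ along the bipartition $(AB\,\vert\,CDE_1\cdots E_{n-4})$. Here the smaller side $\mathcal H_A\otimes\mathcal H_B$ has dimension $d^2\le d^{n-2}$, so a generic state has full Schmidt rank $d^2$ on this cut,
\begin{equation}
\ket\psi=\sum_{i=1}^{d^2}\sqrt{\lambda_i}\,\ket{i}_{\text{AB}}\otimes\ket{i}_{\text{CDE}_1\cdots},
\end{equation}
with pairwise distinct $\lambda_i$. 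Two of the three prescribed marginals already carry the information needed to pin this decomposition down: $\rho_{\text{AB}}$ is recovered from $\rho_{\text{ABE}_1\cdots}$ by tracing out the $E_i$, while $\rho_{\text{CDE}_1\cdots}$ is itself precisely the reduced state on the large side of the cut.

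First I would fix the Schmidt data, exactly as in the proofs of the preceding theorems. Genericity forces $\rho_{\text{AB}}$ to have rank $d^2$ with pairwise distinct eigenvalues, so it determines the $\lambda_i$ and the $\ket{i}_{\text{AB}}$ up to phases; likewise $\rho_{\text{CDE}_1\cdots}$ has the same nonzero spectrum and determines the $\ket{i}_{\text{CDE}_1\cdots}$ up to phases. Hence any pure state $\ket\phi$ sharing these two marginals is necessarily of the form $\ket\phi=\sum_{i=1}^{d^2}e^{i\varphi_i}\sqrt{\lambda_i}\,\ket{i}_{\text{AB}}\otimes\ket{i}_{\text{CDE}_1\cdots}$, leaving $d^2-1$ free relative phases, precisely the count appearing in Theorem~\ref{thm:thm2}.

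Next I would impose the third constraint in its weakest useful form, namely equality of the two-body marginal $\rho_{\text{BD}}=\trace_{E_1\cdots E_{n-4}}(\rho_{\text{BDE}_1\cdots})=\trace_{\text{AC}E_1\cdots E_{n-4}}(\ketbra{\psi}{\psi})$. Written in the Schmidt basis this yields the same linear system as in Theorem~\ref{thm:thm2},
\begin{equation}
\sum_{i,j=1}^{d^2}\gamma_{ij}\,Q_{ij}\otimes\tilde R_{ij}=0,
\end{equation}
with $\gamma_{ij}$ as in Eq.~(\ref{eq:gamma}), $Q_{ij}=\trace_{\text A}(\ketbra{i}{j}_{\text{AB}})$ acting on $\mathcal H_B$ and $\tilde R_{ij}=\trace_{\text{C}E_1\cdots E_{n-4}}(\ketbra{i}{j}_{\text{CDE}_1\cdots})$ acting on $\mathcal H_D$. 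Both families are $d\times d$ and obey $\trace Q_{ij}=\trace\tilde R_{ij}=\delta_{ij}$, $Q_{ij}^\dagger=Q_{ji}$ and $\tilde R_{ij}^\dagger=\tilde R_{ji}$, so this is the identical algebraic setup of the proof of Theorem~\ref{thm:thm2} with $d^2$ Schmidt terms. The linear count there cuts the $\gamma_{ij}$ down to a $(d^2-1)$-real-parameter family, and the quadratic compatibility relations~(\ref{eq:compatibility})--(\ref{eq:cij2}), evaluated on the still-unused and mutually independent Schmidt coefficients $\lambda_i$, force $\gamma_{ij}=0$. Thus all $\varphi_i$ coincide and $\ket\phi=e^{i\varphi}\ket\psi$; since each of $\rho_{\text{ABE}_1\cdots}$, $\rho_{\text{CDE}_1\cdots}$, $\rho_{\text{BDE}_1\cdots}$ is an $(n-2)$-body marginal, this also gives $(n-2)$-UDP.

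The only point that is not a verbatim repetition of Theorem~\ref{thm:thm2} — and the step I expect to need care — is verifying that $Q_{ij}$ and $\tilde R_{ij}$ remain generic enough for the linear-independence argument. The $Q_{ij}$ are unchanged. For the $\tilde R_{ij}$ one extends the partial isometry $i\mapsto\ket{i}_{\text{CDE}_1\cdots}$ to a Haar-random unitary on $\mathcal H_C\otimes\mathcal H_D\otimes\mathcal H_{E_1}\otimes\cdots$ and expands in a product basis; the entries of $\tilde R_{ij}$ are then quadratic in the Haar-random coefficients, and any linear relation among them beyond $\trace\tilde R_{ij}=\delta_{ij}$ would be a nontrivial polynomial identity on the orthonormality variety, hence fails on a set of full measure by exactly the argument used for the $q_{ij}$ and $r_{ij}$ in the qubit proof. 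Independence of the $Q$ and $\tilde R$ families is automatic because the two Schmidt bases of the cut are drawn independently. With this established, the remainder of the proof is word for word that of Theorem~\ref{thm:thm2}.
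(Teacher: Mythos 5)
Your proof is correct, but it takes a genuinely different route from the paper's. You Schmidt-decompose along the cut $(AB\,\vert\,CDE_1\cdots E_{n-4})$, so that the spectator particles are absorbed into the large side, observe that $\rho_{\text{AB}}$ (traced out of $\rho_{\text{ABE}_1\cdots}$) and $\rho_{\text{CDE}_1\cdots}$ fix the Schmidt data up to phases, and then rerun the algebra of Theorem~\ref{thm:thm2} verbatim with the modified operators $\tilde R_{ij}=\trace_{\text{C}E_1\cdots}(\ketbra{i}{j}_{\text{CDE}_1\cdots})$. The single non-verbatim step --- that the $\tilde R_{ij}$ are still generic enough for the linear system to have full rank --- you correctly isolate, and the specialization $\ket{i}_{\text{CDE}_1\cdots}=\ket{i}_{\text{CD}}\otimes\ket{1}_{\text{E}}$ reduces it to the four-particle system, so the rank condition fails only on a measure-zero subset of the frame variety. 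The paper instead decomposes along $(ABCD\,\vert\,E_1\cdots E_{n-4})$, writes $\ket\psi=\sum_i\sqrt{\lambda_i}\ket{\psi_i}\otimes\ket{i}_{\text{E}}$, extracts the two-body marginals of each four-particle Schmidt component $\ket{\psi_i}$ by projecting the $(n-2)$-body marginals onto $\ketbra{i}{i}_{\text{E}}$, invokes Theorem~\ref{thm:thm2} as a black box to fix each $\ket{\psi_i}$ up to a phase, and finally fixes the relative phases by comparing the off-diagonal $\ketbra{i}{j}_{\text{E}}$ blocks of $\rho_{\text{ABE}_1\cdots}$. The paper's route is more modular and needs no re-examination of the linear-independence argument (though it silently requires each $\ket{\psi_i}$ to be generic in the sense of Theorem~\ref{thm:thm2}); your route pays for redoing the genericity analysis but buys a slightly stronger conclusion, since it only consumes $\rho_{\text{AB}}$, $\rho_{\text{BD}}$ and $\rho_{\text{CDE}_1\cdots}$ --- two two-body marginals plus a single $(n-2)$-body marginal --- rather than all three $(n-2)$-body marginals.
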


\begin{proof}
We denote by $E$ all the parties $E_1,\ldots,E_{n-4}$. Consider a generic pure
$n$-particle state $\ket\psi$ with known $(n-2)$-body marginals $\rho_{\text{ABE}}$, $\rho_{\text{ACE}}$
and $\rho_{\text{CDE}}$. From these, one can obtain the $(n-4)$-particle marginal $\rho_\text{E}$.
This allows us to decompose a generic state into
\begin{equation}
\ket\psi = \sum_{i=1}^{\min(d^4,d^{n-4})} \sqrt{\lambda_i} \ket{\psi_i} \otimes \ket{i}_\text{E},
\end{equation}
where the Schmidt basis $\ket{i}_\text{E}$ and Schmidt coefficients $\lambda_i$ are determined by
$\rho_\text{E}$ and the Schmidt vectors $\ket{\psi_i}$ on $ABCD$ have yet to be determined.
On the one hand, knowing the $(n-2)$-body marginal $\rho_{\text{ABE}}$ allows us to determine 
all expectation values of the form
\begin{equation}
\braket{\psi|O_\text{A} \otimes O_\text{B} \otimes \ketbra{i}{i}_\text{E}|\psi} = \trace(O_\text{A} \otimes O_\text{B} \otimes \ketbra{i}{i}_\text{E}\rho_\text{ABE})
\end{equation}
for all $i$, where $O_\text{A}$ and $O_\text{B}$ are some local observables of parties $A$ and $B$, respectively.
On the other hand, this is equivalent to knowing all expectation values $\braket{\psi_i|O_\text{A} \otimes O_\text{B}|\psi_i}$
of the pure four-particle constituent $\ket{\psi_i}$, yielding its reduced state $\rho_\text{AB}^{(i)}$. The same can be
done for parties $AC$ and parties $CD$. According to Theorem \ref{thm:thm2}, this determines the states
$\ket{\psi_i}$ uniquely up to a phase. Thus, the joint state on $ABCDE$ has to have the form
\begin{equation}
\ket{\psi^\prime} = \sum_{i=1}^{\min(d^4,d^{n-4})}e^{i\varphi_i}\sqrt{\lambda_i} \ket{\psi_i} \otimes \ket{i}_\text{E}.
\end{equation}
However, from this family only the choice $\varphi_i=\varphi_j$ for all $i,j$ is compatible with the known
reduced state $\rho_{\text{ABE}}$: The reduced state
\begin{equation}
\rho_{\text{ABE}}^\prime = \sum_{i,j} e^{i(\varphi_i-\varphi_j)}\sqrt{\lambda_i\lambda_j} \trace_{\text{CD}}(\ketbra{\psi_i}{\psi_j}) \otimes \ketbra{i}{j}_\text{E}
\end{equation}
can be compared term by term with the known marginal, as the $\ket{i}_\text{E}$ are orthogonal. 
Therefore, $\ket{\psi^\prime}=e^{i\varphi}\ket\psi$ and the state is determined again.
\end{proof}

It must be stressed that the main statement of this Corollary is the fact that 
three $n-2$ marginals can already suffice. The fact that pure states are $(n-2)$-UDP 
is not surprising, as usually already less knowlegde is sufficient to make a pure 
state UDA, see Ref.~\cite{jones2005parts} for a discussion.

\begin{figure}[t!]
\begin{centering}
\includegraphics[width=0.85\columnwidth]{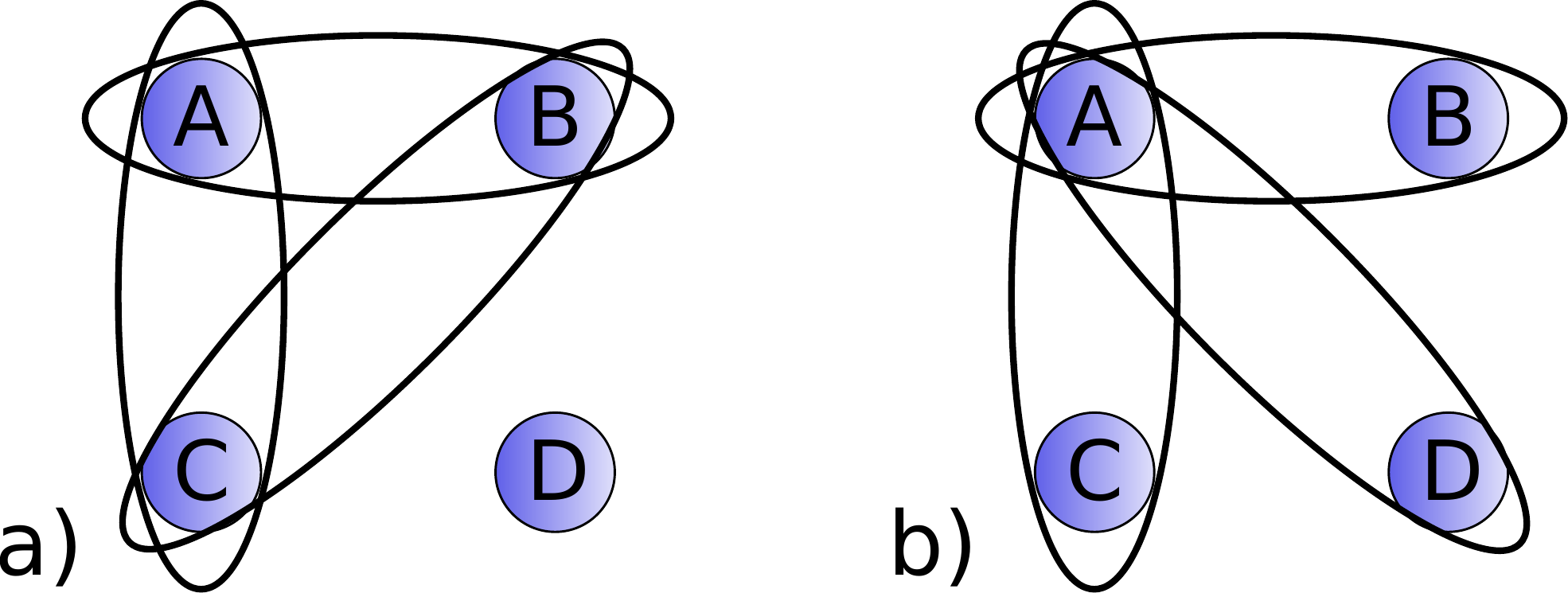}
\par\end{centering}
\protect\caption{Illustration of the two other possible sets 
of three two-body marginals: a) a set of marginals, which clearly 
does not determine the global state, as $\rho_D$ is not fixed. 
b) a set of marginals to which our proof does not apply. Nevertheless, 
we have numerical evidence that these marginals still determine the 
state uniquely for qubits.
\label{fig:Illustration-unknown}}
\end{figure}

{\it States that are not UDP.}
As the proof above is valid for generic states only, it is natural to
ask whether there are special four-particle states that are not UDP. This
is indeed the case. In the following, we give an incomplete list of
undetermined four-particle qubit states. Note that if any two states $\ket\psi$
and $\ket\phi$ share the same two-body marginals, then also all local
unitary equivalent states $\ket\psi^{\prime}=U_{A}\otimes U_{B}\otimes U_{C}\otimes U_{D}\ket\psi$
and $\ket\phi^{\prime}=U_{A}\otimes U_{B}\otimes U_{C}\otimes U_{D}\ket\phi$
share the same marginals. Thus, we restrict ourselves to states $\ket\psi=\sum\alpha_{ijkl}\ket{ijkl}$
of the standard form introduced in Ref.~\citep{carteret2000multipartite},
where 
\begin{eqnarray}
\alpha_{0000},\alpha_{0001},\alpha_{0010},\alpha_{0100},\alpha_{1000} & \in & \mathbb{R}\:,\nonumber \\
\alpha_{0111},\alpha_{1011},\alpha_{1101},\alpha_{1110} & = & 0
\end{eqnarray}
and all other coefficients being complex. In the following list, the
states are always assumed to be normalized. To shorten the notation,
we make use of the $W$ state 
\[
\ket{W_{4}}=\frac{1}{2}(\ket{0001}+\ket{0010}+\ket{0100}+\ket{1000})
\]
and of the Dicke state 
\begin{eqnarray*}
\ket{D_2^4} & = & \frac{1}{\sqrt{6}}(\ket{0011}+\ket{0101}+\ket{1001}\\
 &  & \hphantom{\frac{1}{\sqrt{6}}(}+\ket{0110}+\ket{1010}+\ket{1100}).
\end{eqnarray*}
Due to the standard form, we have in the following $a,b\in\mathbb{R}$, 
while $r,s\in\mathbb{C}$. The claimed properties of the states can directly
be computed. 
\begin{itemize}
\item For fixed $a,b$ and $s$, the family 
\begin{equation}
\ket\psi=a\ket{0000}+b\ket{W_{4}}+se^{i\varphi}\ket{1111}
\end{equation}
shares the same two-body marginals for all values of $\varphi$.
\item For the same state with $a=0,b=\frac{2}{\sqrt{6}}$ and $s=\frac{1}{\sqrt{3}}$,
\begin{equation}
\ket\phi=\frac{1}{2}\ket{0000}+\frac{1}{\sqrt{2}}e^{i\varphi}\ket{D_2^4}-\frac{1}{2}e^{2i\varphi}\ket{1111}
\end{equation}
shares the same marginals for all values of $\varphi$.
\item For every state 
\begin{equation}
\ket\psi=a\ket{0000}+r\ket{D_2^4}+s\ket{1111},
\end{equation}
the state 
\begin{equation}
\ket\phi=a\ket{0000}+re^{i\varphi_{r}}\ket{D_2^4}+se^{i\varphi_{s}}\ket{1111}
\end{equation}
 shares the same marginals if $\overline{r}se^{i\varphi_{s}}=are^{i\varphi_{r}}(1-e^{i\varphi_{r}})
 +\overline{r}se^{i\varphi_{r}}$,
which is feasible for e.g., $a=0$.
\end{itemize}
All of our examples are superpositions of Dicke states and generalized GHZ states. By a local unitary operation,
these examples also include the Dicke state with three excitations.
The examples prove that Theorem 3 does not hold for all four-particle states, but only for 
generic states. 

{\it Discussion.}
We have shown that generic four-qudit pure states are uniquely determined
among pure states by three of their six different marginals of two
parties. Interestingly, from this it follows that pure states of an
arbitrary number of qudits are determined by certain subsets of their
marginals having size $n-2$.
The proof required two marginals of distinct systems to be
equal, for instance $\rho_{\text{AB}}$ and $\rho_{\text{CD}}$, in
order to fix the Schmidt decomposition of the compatible state. However,
there are two other sets of three two-body marginals, illustrated
in Fig.~\ref{fig:Illustration-unknown}. The first one, namely knowledge
of $\rho_{\text{AB}}$, $\rho_{\text{AC}}$ and $\rho_{\text{BC}}$, is certainly
not sufficient to fix the state, as we do not have any knowledge of
particle $D$ in this case: Every product state $\rho_{\text{ABC}}\otimes\rho_{\text{D}}$
with arbitrary state $\rho_{\text{D}}$ is compatible. The situation
for the second configuration, namely knowledge of the three marginals
$\rho_{\text{AB}}$, $\rho_{\text{AC}}$ and $\rho_{\text{AD}}$,
is not that clear. In a numerical survey testing random four-qubit 
states, we could not find pairs of different pure states which coincide
on these marginals. Thus, we conjecture that any marginal 
configuration involving all four parties determines generic states.
In any case, knowledge of any set of four two-body marginals fixes
the state, as there are always two marginals of distinct
particle pairs present in these sets.

The question remains which pure four-qubit states are also 
uniquely determined among all mixed states by their two-body marginals. The 
results from Ref.~\citep{jones2005parts} suggest that generic states 
are not UDA, and Ref.~\citep{huber2016characterizing} 
shows that for the case of four qutrits
and knowledge of all marginals, as well as for four qubits
and the special marginal configuration of Fig. 1 (b), generic
states are not UDA.
On the other hand, in the same reference, a numerical procedure indicated that
for generic pure four-qubit states the compatible mixed states (having the same marginals)
are never of full rank. Clarifying this situation is an interesting problem for further 
research.

{\it Acknowledgments.}
This work was supported by 
the Swiss National Science Foundation (Doc.Mobility Grant 165024), 
the COST Action MP1209, the DFG, and the ERC (Consolidator Grant
No. 683107/TempoQ).

\bibliographystyle{aipnum4-1}
\bibliography{cite}

\end{document}